\newtheorem{theorem}{Theorem}[section]
\newtheorem{proposition}[theorem]{Proposition}
\newtheorem{lemma}[theorem]{Lemma}
\newcommand{\T}{\mathcal T}
\newcommand{\PP}{{\mathbb P}}
\newcommand{\EE}{{\mathbb E}}
\newcommand{\old}[1]{{}}
\title[Resolving deep divergences]{Sequence length bounds for resolving a deep phylogenetic divergence}
\author{Mareike Fischer and Mike Steel*}
\date{\today}
\subjclass{05C05; 92D15}
\keywords{phylogenetic tree, DNA sequences, markov process, maximum parsimony}
\begin{document}

 \begin{abstract}
In evolutionary biology, genetic sequences carry with them a trace
of the underlying tree that describes their evolution from
a common ancestral sequence.  The question of how many sequence
sites are required to recover this evolutionary relationship
accurately depends on the model of sequence evolution, the
substitution rate, divergence times and the method used to infer
phylogenetic history. A particularly challenging problem for
phylogenetic methods arises when a rapid divergence event occurred
in the distant past.  We analyse an idealised form of this problem
in which the terminal edges of a symmetric four--taxon tree are some
factor ($p$) times the length of the interior edge. We determine an
order $p^2$  lower bound on the growth rate for the sequence length
required to resolve the tree (independent of any particular branch
length). We also show that this rate of sequence length growth can
be achieved by existing methods (including the simple `maximum parsimony'
method), and compare these order $p^2$ bounds with an order $p$
growth rate for a model that describes low-homoplasy evolution.  In
the final section, we provide a generic bound on the 
sequence length requirement for a more general class of Markov processes.
\end{abstract}

\maketitle

\noindent {\em Allan Wilson Centre for Molecular Ecology and Evolution\\
Biomathematics Research Centre,  University of Canterbury \\
Private Bag 4800, Christchurch, New Zealand}

*Corresponding Author:
Phone: +64-3-3667001, Ext. 7688
Fax: +64-3-3642587
Email: m.steel@math.canterbury.ac.nz, email@mareikefischer.de

\newpage

\bigskip
\section{Introduction}
When sequence sites evolve independently under a Markov process
along the branches of a tree $\T$, the sequences observed at the tips
contain information concerning the underlying tree. This allows for the tree
$\T$ to be reconstructed accurately from sufficiently long
sequences; this is the basis of modern molecular systematics
\cite{fels}.  The number of sites required to
reconstruct $\T$ accurately depends on how long the edges of the tree are. More precisely, it depends on the expected number of
substitutions on each branch (edge) $e$ of the tree -- which we refer to as the {\em branch length} of $e$ (this is
the product of the temporal duration of the branch and the
substitution rate).

A number of authors (e.g. \cite{churchill_haeseler_navidi, lec, sai, townsend, wort, xia, yang}) have considered various ways to quantify the phylogenetic
signal in aligned DNA sequences, and to estimate the sequence length required to reconstruct a phylogenetic tree. Most of these studies have involved
simulation or heuristic approaches, although some analytical bounds have also been obtained \cite{mos1, steel_szekely}.  Typically, these bounds state that if an interior branch length is very short, or if a terminal (external) branch length is long, then a large number of sites will be required.

In this paper we explore these results further by obtaining bounds that are expressed purely in terms of the relative sizes of the branch lengths, not their absolute values. One motivation for our approach is that different genes are known to evolve at different rates, so that any particular branch length will depend on which gene is considered; however, the ratios of the branch lengths will be unchanged if the gene-specific rate applies uniformly across the tree.

A particularly difficult tree reconstruction problem, requiring long
sequences to resolve, arises when one has an interior edge with a
short branch length incident with edges (or subtrees) having large
branch lengths. Such a scenario occurs, for example, when a relatively
rapid speciation event (leading to the short branch length for that edge) occurred in the distant past
(leading to the large branch lengths for the incident edges).  Several examples of this have been highlighted in the literature \cite{loc, rok} and include the origin of metazoa  and the
origin of photosynthesis.

In this paper we analyse a scenario which, although somewhat
idealised, nevertheless captures the essence of this problem -- a
four-taxon tree, where the terminal edges have equal branch lengths that
are $p>1$ times the branch lengths of the interior edge, and a
simple symmetric model of site evolution (specifically, we assume
sites evolve independently according to a common two--state Markov process).

We provide a mathematical analysis to the question of how many sites
are required to resolve the tree correctly (from the three possible
resolved topologies on four taxa).  We are particularly interested in
how the growth of the sequence length, $k$, depends on $p$,
independent of the absolute value of a particular edge length.  We
establish that $k$ must grow at the rate $p^2$, which implies that
regardless of how fast (or slow) any particular sequence is
evolving, we can set definite lower bounds on the length of
sequences required to resolve the tree.  We then show that for our
setting, $p^2$ growth in $k$ is the best possible, as an existing method
(namely, maximum parsimony) achieves this bound.  Our results complement an earlier simulation-based analysis  \cite{yang}.  We contrast our results by
considering a quite different model of site evolution (the infinite
state model) and establishing that order $p$ growth in $k$ can
sometimes suffice for this model.

We also extend the approach to more general markov processes on trees, obtaining exact, but less explicit lower bounds on $k$ and which involve absolute (rather than relative) branch lengths. 
Our arguments are based on standard techniques from probability theory,
such as central limit approximation, and information-theoretic arguments based on the properties of Hellinger distance.

\section{Preliminaries}

Consider an unrooted binary phylogenetic tree on four taxa, say
$12|34$, with branch length $x$ for the interior edge $e_5$ and $px$
for the terminal edges $e_1,\ldots,e_4$, where $p >
1$. This is illustrated in Fig. 1(a), and the topology of the
tree is shown at the top of Fig. 1(b). The other two competing
topologies ($13|24$ and $14|23$) are also shown in Fig. 1(b). Here
branch length refers to the expected number of substitutions under
some continuous time substitution process.

\begin{figure}[ht] \begin{center}
\resizebox{8cm}{!}{
\input{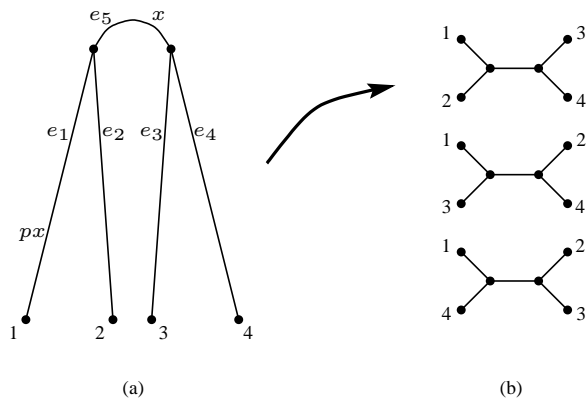}
}
\caption{(a) The generating tree with interior branch length $x$ and all four terminal branch lengths equal to $px$. (b) This tree has the topology $12|34$, while the other two binary topologies are $13|24$ and $14|23$.}
\end{center}
\label{figure1}
\end{figure}

Recall that a {\em binary character} or {\em site pattern} refers to an assignment to each taxon of a state from some two-element set, which we will denote through this paper as $\{\alpha, \beta\}$.

Suppose that a sequence of binary characters are generated independently and identically (i.i.d.) under a symmetric two-state model on the tree.  This model is often called the CFN (Cavender-Farris-Neyman model) or more briefly the Neyman 2-state model (for more details see e.g. \cite{semple_steel}).    Although it is the 
simplest non-trivial Markov process on a tree, it allows for an
exact analysis. Moreover, stochastic results for this model
typically extend to more general finite-state models where an exact
analysis is usually more complex \cite{mos1}, and in
Section~\ref{extend} we show how some of our approaches extend to
more general Markov processes.

If we denote the substitution probability on edge $e_i$ by $P(e_i)$, then for each terminal edge we
have $P(e_i) = \frac{1}{2}(1-2\exp(-2px))$ while for the central
edge $e_5$, we have $P(e_5) = \frac{1}{2}(1-2\exp(-2x))$. Let
$\theta_i=1-2P(e_i)$ for $i =1,\ldots, 5.$ Then we can express
these five $\theta_i$ values in terms of $\theta:=e^{-2x}$ as
follows:
$$\theta_i=\theta^p \mbox { for } i=1,\ldots, 4; \mbox{ and } \theta_5=\theta.$$

Now, if we fix $x$ and let $p$ grow, or, alternatively, if we fix
$px$ and let $x$ tend to zero, then the sequence length $k$
required to reconstruct the topology of the generating tree accurately tends to
infinity.  This holds for any tree reconstruction method that treats
all three topologies fairly (if a method has an a priori preference
for one topology, it will perform worse on an alternative topology).
For example, if $px$ is fixed, then $k$ grows at the rate
$\frac{1}{x^2}$ as $x$ tends to zero (by Theorem 4.1 of
\cite{steel_szekely}).  However, if we do not fix $x$ or $px$ in
advance two fundamental questions arise: what is the slowest rate
that $k$ can possibly grow as a function of $p$? and (ii) does some
value of $x$ (dependent on $p$) achieve this rate of growth for a
certain tree reconstruction method?  We will see that for the simple
scenario described, the answers to these questions are (i) $p^2$ and
(ii) yes (up to a constant factor).

\section{Lower bounds}

The main result of this section is the following:

\begin{theorem} \label{OtherMethodsTheorem}  Suppose $k$ sites evolve i.i.d. under
a symmetric two-state model on some (unknown) four-taxon tree that has
branch length $x$ on the interior edge and $px$ on each terminal edge.
 Then any method that is able to correctly
identify the underlying tree topology with probability at least
$1-\epsilon$ requires:
\begin{equation*} k \geq c_{\epsilon}\cdot p^2 \end{equation*}
for any $x$, where
$c_{\epsilon}= \frac{1}{2}(1-\frac{3}{2}\epsilon)^2$.
\end{theorem}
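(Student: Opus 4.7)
\medskip
\noindent\textbf{Proof proposal.} I plan a three-step information-theoretic argument. Let $P_i$ denote the single-site distribution on $\{\alpha,\beta\}^4$ induced by topology $T_i$, $P_i^k$ its $k$-fold product, and $R_i = \{\hat{T} = T_i\}$ the decision region of a given method, so the hypothesis reads $P_i^k(R_i) \geq 1-\epsilon$ for each $i$. The three distributions $P_1, P_2, P_3$ are related by the $S_4$-action on leaves that permutes the three binary topologies transitively, hence the pairwise total variations $\tau := \mathrm{TV}(P_i^k, P_j^k)$ all share a common value $\tau$ independent of the pair $i \neq j$. From $P_i^k(R_j) \geq P_j^k(R_j) - \tau \geq 1-\epsilon-\tau$ for each $j \neq i$, together with the error constraint $\sum_{j \neq i} P_i^k(R_j) \leq \epsilon$, I would conclude $2(1-\epsilon-\tau) \leq \epsilon$, i.e.\ $\tau \geq 1 - \tfrac{3}{2}\epsilon$. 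The symmetry is essential here: a single pairwise comparison only yields the weaker bound $\tau \geq 1-2\epsilon$.

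Passing from total variation to single-site Hellinger distance then uses $\mathrm{TV} \leq H$ and the sub-additivity $H^2(P^k, Q^k) \leq k\, H^2(P,Q)$ (the latter being a consequence of multiplicativity of the Bhattacharyya coefficient under independent products), giving
\[
k \cdot H^2(P_1, P_2) \;\geq\; \bigl(1 - \tfrac{3}{2}\epsilon\bigr)^2,
\]
so it remains to show $H^2(P_1, P_2) \leq 2/p^2$ uniformly in $x > 0$ to conclude.

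The analytic heart of the argument is this single-site Hellinger estimate. Using the Fourier expansion of the CFN model in the parameters $\theta_i$, and noting that $\theta_1 = \cdots = \theta_4 = \theta^p$ and $\theta_5 = \theta$, the distributions $P_1$ and $P_2$ coincide on every site pattern except the four $2$-vs-$2$ patterns of the two incorrect types $\alpha\alpha\beta\beta$ and $\alpha\beta\alpha\beta$ (and their complements); on these four patterns the probabilities swap between two values $r, s$ with $r - s = \theta^{2p}(1-\theta)/4$. The closed form $H^2(P_1,P_2) = 4(r-s)^2/(\sqrt{r} + \sqrt{s})^2$ together with the elementary bound $(\sqrt{r} + \sqrt{s})^2 \geq 4s$ (valid since $r \geq s$) gives $H^2(P_1,P_2) \leq [\theta^{2p}(1-\theta)/(1-\theta^{2p})]^2$, and the telescoping identity $1-\theta^{2p} = (1-\theta)(1+\theta+\cdots+\theta^{2p-1}) \geq 2p\,\theta^{2p-1}(1-\theta)$ then yields $H^2(P_1,P_2) \leq 1/(4p^2)$ uniformly in $\theta$, comfortably beating the required $2/p^2$. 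The main obstacle I foresee is precisely this telescoping: without it the Hellinger distance only looks exponentially small in $p$ at each fixed $\theta$, so the bound uniform in $x$ fails; the information-theoretic and product-measure steps are otherwise standard templates.
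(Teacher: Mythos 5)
Your proposal is correct in substance and follows essentially the same route as the paper: an information-theoretic lower bound via Hellinger distance, reduction by symmetry to a single pair of topologies, the observation that the two single-site distributions differ only by swapping the probabilities of the two informative patterns, and a bound on the squared Hellinger distance that is uniform in $\theta$ and of order $1/p^2$. Two remarks. First, where the paper invokes a quoted lemma of Steel and Sz\'ekely giving $k \geq \frac{1}{4}(1-\frac{3}{2}\epsilon)^2 d_H^{-2}$, you re-derive the information step from scratch via total variation and multiplicativity of the Bhattacharyya coefficient; your version avoids the factor $\frac{1}{4}$, and combined with your slightly sharper single-site estimate (using $(\sqrt{r}+\sqrt{s})^2 \geq 4s$ in place of the paper's Taylor bound $\sqrt{1+y}\geq 1+\frac{y}{2}-\frac{y^2}{4}$) it yields a constant comfortably exceeding the stated $c_\epsilon$, so the theorem follows. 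Second, the one step that needs repair: your ``telescoping identity'' $1-\theta^{2p}=(1-\theta)(1+\theta+\cdots+\theta^{2p-1})$ presupposes that $2p$ is a positive integer, whereas the theorem allows any real $p>1$. The paper avoids this with a calculus lemma showing $\sup_{\theta\in[0,1)}\theta^{\gamma}(1-\theta)/(1-\theta^{\gamma})=1/\gamma$ for real $\gamma>1$; you can equally patch your argument with the mean value theorem, which gives $1-\theta^{2p}=2p\,\xi^{2p-1}(1-\theta)\geq 2p\,\theta^{2p-1}(1-\theta)$ for some $\xi\in(\theta,1)$, after which your chain of inequalities goes through verbatim for all real $p>1$.
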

To establish this result we require some preliminary results. We
begin with a general information-theoretic bound on the number of
i.i.d. observations required to reconstruct a discrete parameter in
a general setting.

Suppose one has a finite set $A$, and each element $a \in A$ has an
associated probability distribution on a finite set $U$. Suppose we
observe $k$ observations from $U$ that are generated independently
by the same unknown element $a \in A$. Suppose, furthermore, that
some method $M$ estimates the element of $A$ that generated our
observations and does so correctly with probability at least
$1-\epsilon$ (regardless of which element $a$ actually generated the
data). Then we can set a lower bound on $k$ in terms of a stochastic
distance between elements of $A$. Recall that the {\em Hellinger
distance} of two elements $a,a' \in A$ is defined as follows. If $p$
and $q$ denote the probability distribution induced by $a$ and $a'$
respectively then let:
\begin{equation} \label{Hellinger}
d_H^2(a,a'):= \sum\limits_{u\in U} \left( \sqrt{p_u}-\sqrt{q_u}\right)^2 = 2\left( 1- \sum\limits_{u\in U}\sqrt{p_uq_u}\right).
\end{equation}
The latter equality holds as $\sum\limits_{u\in U} p_u=
\sum\limits_{u\in U} q_u= 1$. The following result is from
\cite{steel_szekely} (Theorem 3.1 and (2.7)).
\begin{lemma}
\label{infolemma} If there is a subset $A'$ of  $A$ of size $m\geq
2$ for which $d_H(a,a') \leq d$ for all $a,a' \in A'$ and some
method $M$ correctly identifies each element of $A'$ with
probability at least $1-\epsilon$ from $k$ independently-generated
elements in some set $U$, then:
$$k \geq \frac{1}{4}(1-\frac{m}{m-1}\epsilon)^2 d^{-2}.$$
\end{lemma}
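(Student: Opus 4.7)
The plan is to reduce the $m$-ary testing problem to a binary one via a pairwise averaging argument, and then chain two classical facts: total variation distance lower-bounds the power of any reliable classifier, while squared Hellinger distance is sub-additive under i.i.d.\ tensorization.

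For each $a\in A'$, let $R_a:=\{x\in U^k:M(x)=a\}$. Since the $R_a$ are disjoint and $P_a^k(R_a)\ge1-\epsilon$ by hypothesis, one has $\sum_{a\ne a'}P_{a'}^k(R_a)\le\epsilon$ for every fixed $a'\in A'$. Averaging over the $m-1$ choices of $a\ne a'$ produces, via pigeonhole, some pair $a\ne a'$ for which $P_{a'}^k(R_a)\le\epsilon/(m-1)$, and hence
\[
d_{TV}(P_a^k,P_{a'}^k)\;\ge\;P_a^k(R_a)-P_{a'}^k(R_a)\;\ge\;1-\tfrac{m}{m-1}\epsilon.
\]
This averaging is the conceptual heart of the argument: it converts the ``per-element'' error guarantee into the sharp factor $\frac{m}{m-1}\epsilon$ rather than the cruder $2\epsilon$ that a naive single two-hypothesis reduction would give.

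The remaining steps are routine. First, the elementary bound $d_{TV}(P,Q)\le d_H(P,Q)$ follows from Cauchy-Schwarz applied to the factorization $|p-q|=|\sqrt p-\sqrt q|(\sqrt p+\sqrt q)$. Second, the Bhattacharyya coefficient $BC:=\sum_u\sqrt{p_uq_u}$ factors multiplicatively under products, yielding the tensorization identity $d_H^2(P^k,Q^k)=2\bigl[1-(1-d_H^2(P,Q)/2)^k\bigr]$, which by Bernoulli's inequality is at most $k\,d_H^2(P,Q)\le kd^2$. Chaining with Step~1 gives
\[
\bigl(1-\tfrac{m}{m-1}\epsilon\bigr)^2 \;\le\; d_{TV}^2(P_a^k,P_{a'}^k)\;\le\;d_H^2(P_a^k,P_{a'}^k)\;\le\;kd^2,
\]
which rearranges to the stated lower bound on $k$ (up to the overall factor $\tfrac14$, which is recovered by a slightly looser route through $d_{TV}^2\le1-BC^{2k}$ and Bernoulli applied to $BC^{2k}=(1-d_H^2/2)^{2k}$).

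None of the individual inequalities is deep; the main obstacle, such as it is, is noticing the averaging step, which is what makes the $m$-hypothesis version genuinely sharper than a direct application of the two-hypothesis case and produces the $\frac{m}{m-1}$ in the final bound.
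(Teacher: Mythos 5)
Your proof is correct. Note that the paper itself does not prove this lemma --- it imports it from Steel and Szekely (2002, Theorem 3.1 and Eq.\ (2.7)) --- but your argument is a faithful, self-contained reconstruction of the standard route underlying that result: the pigeonhole/averaging step over the $m-1$ competitors of a fixed $a'$ is exactly what produces the $\frac{m}{m-1}\epsilon$ term, and the chain $d_{TV}\le d_H$ together with the multiplicativity of the Bhattacharyya coefficient and Bernoulli's inequality gives the tensorized Hellinger bound $d_H^2(P^k,Q^k)\le k\,d^2$. In fact your main chain yields $k\ge\bigl(1-\frac{m}{m-1}\epsilon\bigr)^2 d^{-2}$, which is a factor of $4$ \emph{stronger} than the stated bound, so the closing parenthetical about ``recovering'' the $\tfrac14$ is unnecessary (and the alternative route you sketch there gives the same constant anyway); a stronger lower bound trivially implies the stated one. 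Two small points worth making explicit: (i) the step $\bigl(1-\frac{m}{m-1}\epsilon\bigr)^2\le d_{TV}^2$ requires $\epsilon<\frac{m-1}{m}$ so that the quantity being squared is nonnegative --- this is an implicit standing assumption of the lemma (which is false as stated for $\epsilon$ near $1$); and (ii) your sets $R_a$ presuppose a deterministic $M$, though the argument extends verbatim to randomized methods by replacing $P^k(R_a)$ with the expected acceptance probability, since $d_{TV}$ also dominates differences of expectations of $[0,1]$-valued test functions.
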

In our setting, $A$ will consist of the three binary four-taxon
trees on leaf set $\{1,2,3,4\}$, $U$ will consist of the assignment
of states of the elements of this leaf set, and $m$ will be $3$ (in
this section) or $2$ (in Section~\ref{extend}).

Let $S$ be the set of possible binary site patterns on
$\{1,2,3,4\}$. These consist of the site patterns
$s_1:=\alpha\alpha\beta\beta, s_2:=\alpha\beta\alpha\beta$ and
$s_3:=\alpha\beta\beta\alpha$, and five non-informative ones
$s_4,\ldots,s_8$ (note that pairs of complementary site patterns --
for example  $\alpha\alpha\beta\beta$ and $\beta\beta\alpha\alpha$
-- are regarded as equivalent). For any site pattern $s \in S$, let
$p_{s} = \PP(s|\T_1)$ (respectively $q_s=\PP(s|\T_2)$) be the
probability that the site pattern $s$ is generated on $\T_1$
(respectively $\T_2$). We can express the probabilities $p_{s_1}$
and $p_{s_2}$ in terms of $\theta=e^{-2x}$ by using the Hadamard
representation of \cite{hen} (see \cite{semple_steel}, Section 8.6).
We have:
\begin{equation}
\label{s1}
p_{s_1} = \frac{1}{8} \cdot \left( 1 + 2\cdot \theta^{2p}-4\cdot \theta^{2p+1}+\theta^{4p} \right),
\end{equation}
and:
\begin{equation}
\label{s2}
p_{s_2} = \frac{1}{8} \cdot \left( 1 - 2\cdot \theta^{2p}+\theta^{4p} \right)=\frac{1}{8} \left(1-\theta^{2p} \right)^2.
\end{equation}
To obtain an upper bound on the Hellinger distance for our problem,
we require a further technical lemma.
\begin{lemma}
\label{useful}
Let $\gamma>1$ and let $h(x) = \frac{x^\gamma(1-x)}{(1-x^\gamma)}$. Then the supremum of $h(x)$ for $x$ in the half-open interval $[0,1)$ equals $\frac{1}{\gamma}$.
\end{lemma}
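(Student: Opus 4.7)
The plan is to verify two things: that the value $\frac{1}{\gamma}$ is approached along the sequence $x \to 1^-$, and that it is never exceeded on $[0,1)$. Together these give that the supremum equals $\frac{1}{\gamma}$ (although it is not attained).

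First I would compute the limit $\lim_{x \to 1^-} h(x)$. Since both numerator and denominator vanish at $x=1$, I would either apply L'Hopital's rule to $\frac{1-x}{1-x^\gamma}$, or use the factorization $1 - x^\gamma = (1-x)(1 + x + x^2 + \cdots)$-style identity valid for real $\gamma > 1$ (e.g.\ write $1-x^\gamma = -\gamma(x-1) + o(x-1)$ by a first-order Taylor expansion at $x=1$). Either way, $\frac{1-x}{1-x^\gamma} \to \frac{1}{\gamma}$, and since $x^\gamma \to 1$, we get $h(x) \to \frac{1}{\gamma}$. So $\sup h \geq \frac{1}{\gamma}$.

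Next I would show $h(x) < \frac{1}{\gamma}$ for all $x \in [0,1)$. Clearing denominators (the denominator $1 - x^\gamma$ is positive on $[0,1)$), the inequality $h(x) \leq \frac{1}{\gamma}$ is equivalent to
\begin{equation*}
g(x) := (\gamma+1)x^\gamma - \gamma x^{\gamma+1} \leq 1.
\end{equation*}
A direct differentiation gives $g'(x) = \gamma(\gamma+1) x^{\gamma-1}(1-x)$, which is nonnegative on $[0,1]$ and strictly positive on $(0,1)$, so $g$ is strictly increasing on $[0,1]$. Since $g(1) = (\gamma+1) - \gamma = 1$, we conclude $g(x) < 1$ on $[0,1)$, hence $h(x) < \frac{1}{\gamma}$ on $[0,1)$.

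Combining the two steps yields $\sup_{x \in [0,1)} h(x) = \frac{1}{\gamma}$. There is no real obstacle here; the only mild care needed is that $\gamma$ is a real (not necessarily integer) parameter $>1$, which is why I favor the Taylor/L'Hopital argument for the limit rather than a geometric-series factorization, and why the monotonicity of $g$ must be obtained via calculus rather than by direct algebraic expansion.
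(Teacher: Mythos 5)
Your proof is correct. It differs from the paper's in one structural respect: the paper establishes the supremum by showing $h'(x)>0$ on $(0,1)$ (asserted with ``it can be checked''), so that $\sup h$ equals $\lim_{x\uparrow 1}h(x)$, which is then evaluated by L'H\^opital's rule. You instead prove the two halves of the supremum claim separately: the same L'H\^opital limit gives $\sup h \geq \frac{1}{\gamma}$, and for the upper bound you clear denominators to reduce $h(x)\leq\frac{1}{\gamma}$ to the polynomial-type inequality $g(x):=(\gamma+1)x^\gamma-\gamma x^{\gamma+1}\leq 1$, which follows from $g'(x)=\gamma(\gamma+1)x^{\gamma-1}(1-x)\geq 0$ and $g(1)=1$. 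Your route has the advantage that the monotonicity computation is done on the simple function $g$ rather than on the quotient $h$ itself, so the step the paper leaves to the reader is fully and cleanly verified; the paper's route yields the slightly stronger fact that $h$ itself is increasing on $(0,1)$, though that extra information is not needed for the application. Both arguments are complete and handle non-integer $\gamma$ correctly.
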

\begin{proof}
Since $\gamma>1$ it can be checked that $h'(x)>0$ for all $x$ in
$(0,1)$, and so $\sup_{x \in [0,1)} h(x) = \lim_{x\uparrow 1} h(x)$.
By L'H$\hat{{\rm o}}$pital's rule, we have $\lim_{x\uparrow 1} h(x) =
\frac{1}{\gamma}$.
\end{proof}

{\em Proof of Theorem \ref{OtherMethodsTheorem}}.

If any method has a probability of at least $1-\epsilon$ of correctly reconstructing each
of the three binary trees on four taxa from i.i.d. sequences of
length $k$ then, by Lemma~\ref{infolemma} with $m=3$ we have:
\begin{equation}
\label{hellyeq2}
k \geq \frac{(1-\frac{3}{2}\epsilon)^2}{4}\cdot d_H^{-2}.
\end{equation}
where $d_H$ is the maximum Hellinger distance between any two of the
three trees.  Now,  if each of the three trees has the $x,px$
combination of branch lengths (for interior, terminal branches,
respectively) then, by symmetry, all three of these pairwise
Hellinger distances are equal. Moreover, we claim that :
\begin{equation}
\label{hellyeqq}
 d_H^{-2} \geq 2p^2.
 \end{equation}
 which together with  (\ref{hellyeq2}) requires $k \geq c_{\epsilon}p^2$ for
the choice of $c_\epsilon$ described.  Thus it remains to establish (\ref{hellyeqq}).

Without loss of generality, $\T_1=12|34$ and $\T_2= 13|24$. Now, for
all $i=3,\ldots,8$, we have $p_{s_i}=q_{s_i}$.  Furthermore,
$p_{s_1}=q_{s_2}$ and $p_{s_2}=q_{s_1}$ as the given trees are
identical except for their leaf labelling. Consequently, Eqn.
(\ref{Hellinger}) can be simplified as follows:
\begin{eqnarray}
d_H^2(\T_1,\T_2)&=&2\left( 1-\sum\limits_{i=1}^8 \sqrt{p_{s_i} q_{s_i}}\right) = 2\left( 1-\sum\limits_{i=3}^8 p_{s_i} - 2\sqrt{p_{s_1}p_{s_2}}\right) \\
&=& 2\left( 1- \left(1-p_{s_1}-p_{s_2}\right) - 2\sqrt{p_{s_1}p_{s_2}}\right)\\
\label{Hellinger2} &=& 2\left(p_{s_1}+p_{s_2} - 2\sqrt{p_{s_1}p_{s_2}}\right)
\end{eqnarray}
Let $\delta = \frac{1}{2}\theta^{2p}(1-\theta)$. Then
$p_{s_1}=p_{s_2}+\delta$, and so Eqn. (\ref{Hellinger2}) can be
re-written as:
\begin{equation}\label{Hellinger_with_epsilon}
d_H^2(\T_1,\T_2)=4 p_{s_2}\left( 1+\frac{\delta}{2p_{s_2}} - \sqrt{1+\frac{\delta}{p_{s_2}}} \right).
\end{equation}
Applying the inequality $\sqrt{1+y} \geq 1 + \frac{y}{2} - \frac{y^2}{4}$, for any $y>0$, to
$y= \frac{\delta}{p_{s_2}}$ in (\ref{Hellinger_with_epsilon}), gives:
$$d_H^2(\T_1,\T_2) \leq \frac{\delta^2}{p_{s_2}} =2\left[\frac{\theta^{2p}(1-\theta)}{1-\theta^{2p}}\right]^2 \leq \frac{1}{2p^2},$$
where the last inequality follows by  invoking
Lemma~\ref{useful} with $\gamma = 2p, x = \theta$. This establishes (\ref{hellyeqq}) and thereby completes the proof of the theorem.

 \hfill $\Box$

\section{An Upper bound: The Performance of Maximum Parsimony} \label{MPsection}

We now show that the lower bound described above is essentially
`best possible' (up to a constant factor) for the given model, as it
can be achieved for a certain choice of $x$ by a simple tree
reconstruction  method, namely Maximum Parsimony (MP). This method selects the tree that requires
the smallest number of substitutions to extend the sequences at the tips of the tree to (ancestral) sequences
at all the interior vertices of the tree (for further background, the reader can consult, for example, \cite{fels} or \cite{semple_steel}).

The probability that MP correctly reconstructs the true tree $12|34$ will be called the
{\it MP reconstruction probability}. In the following theorem, and subsequently, the
notation $c \sim_p C$ indicates that $c/C$ converges to $1$ as $p$ grows.
 Let $f(\epsilon)$ denote the one-sided
$\epsilon$-critical value for the standard normal distribution,
defined by:
\begin{equation*}
f(\epsilon)=z \Leftrightarrow \int\limits_{-\infty}^{z} \frac{1}{\sqrt{2\pi}}e^{-t^2/2}dt = \epsilon.
\end{equation*}

\begin{theorem} \label{MPtheorem}
Suppose $k$ sites evolve i.i.d. under
a symmetric two-state model on some (unknown) four-taxon tree that has
branch length $x$ on the interior edge and $px$ on each terminal edge.
If $k \geq c' p^2
f(\frac{\epsilon}{2})^2$, where $c' \sim_p 4e^{2}$, an interior branch length
$x$ exists for which the MP reconstruction probability is at least
$1-\epsilon$.
\end{theorem}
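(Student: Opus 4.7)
My approach exploits the combinatorial simplicity of Maximum Parsimony on four taxa: MP returns the topology whose associated informative pattern $s_j$ ($j=1,2,3$) occurs most frequently in the data. Hence MP reconstructs $\T_1=12|34$ correctly exactly when $N_1>\max(N_2,N_3)$, where $N_j$ is the number of sites exhibiting pattern $s_j$. Since $\T_1$ is invariant under the leaf swaps $1\leftrightarrow 2$ and $3\leftrightarrow 4$, we have $p_{s_2}=p_{s_3}$, so the union bound yields
$$\PP(\text{MP misidentifies }\T_1)\leq 2\,\PP(N_1\leq N_2),$$
reducing the task to showing $\PP(N_1\leq N_2)\leq \epsilon/2$ for a suitable $x$ and $k$.

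I would then write $N_1-N_2=\sum_{i=1}^k(X_i-Y_i)$ with $X_i,Y_i$ the $\{0,1\}$-indicators of patterns $s_1,s_2$ at site $i$. The summands are i.i.d., supported on $\{-1,0,1\}$, with mean $\delta=\frac{1}{2}\theta^{2p}(1-\theta)$ (as derived in the proof of Theorem~\ref{OtherMethodsTheorem}) and variance $\sigma_0^2=p_{s_1}+p_{s_2}-\delta^2$, both expressible from (\ref{s1})--(\ref{s2}). By the central limit theorem (with a Berry--Esseen error to make the estimate finite-sample rather than purely asymptotic),
$$\PP(N_1\leq N_2)\leq \Phi\!\left(-\frac{\sqrt{k}\,\delta}{\sigma_0}\right)+o(1),$$
and the requirement $\Phi(-\sqrt{k}\,\delta/\sigma_0)\leq\epsilon/2$ translates, via the definition of $f$, into the sufficient condition
$$k\geq f(\epsilon/2)^2\cdot\frac{\sigma_0^2}{\delta^2}.$$

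The final step is to exhibit a specific interior branch length $x=x(p)$ that keeps $\sigma_0^2/\delta^2$ within a factor $c'\sim_p 4e^2$ of $p^2$. I would take $x=x(p)$ so that $\theta^{2p}$ converges to a fixed constant in $(0,1)$, equivalently $x\sim c/p$ for some $c>0$; then $1-\theta\sim 2c/p$ and (\ref{s1})--(\ref{s2}) give $\delta\sim ce^{-2c}/p$ and $p_{s_1}+p_{s_2}\to (1-e^{-2c})^2/4$, leading to $\sigma_0^2/\delta^2\sim p^2(e^{2c}-1)^2/c^2$. The main obstacle I anticipate is matching the specific target constant $4e^2$: this requires a careful choice of $c$ (equivalently of $px$), together with controlling the sub-leading Berry--Esseen error, the $\delta^2$ correction in $\sigma_0^2$, and the Taylor remainder in $1-\theta=1-e^{-2x}$, so that the asymptotic prefactor indeed becomes $4e^2$ rather than some other order-one constant. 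If this tuning proves delicate, a robust fall-back is to replace the CLT step by a one-sided Chebyshev or Chernoff inequality on $\{-1,0,1\}$-valued sums; this avoids the approximation error at the cost of losing the $f(\epsilon/2)^2$ factor in favour of a $1/\epsilon$ dependence.
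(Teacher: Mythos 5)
Your proposal follows essentially the same route as the paper: the same union bound over the two rival topologies, the same i.i.d.\ sum of $\{-1,0,1\}$-valued pattern indicators with mean $\frac{1}{2}\theta^{2p}(1-\theta)$ per site, a Berry--Esseen-corrected normal approximation (the paper cites Zahl for exactly the reason you give, namely that the summand distribution changes with $p$), and an interior branch length of order $1/p$. Your worry about matching the constant $4e^2$ exactly is moot: the paper gets $4e^2$ by maximising $\theta^{2p}(1-\theta)$ at $\theta=1-\frac{1}{2p+1}$ while bounding the standard deviation crudely by $\sqrt{k}/2$, so any asymptotic prefactor not exceeding $4e^2$ suffices, and your optimisation of the full ratio $\sigma_0^2/\delta^2$ in fact yields a smaller one (modulo a bookkeeping slip: with $x\sim c/p$ one has $\theta^{2p}\to e^{-4c}$, not $e^{-2c}$).
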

In order to prove this theorem, some preliminary work is required.
Suppose we generate  a sequence $\mathcal{C}$ of $k$ i.i.d. sites under the symmetric two-state model. Define the random variables $X_i$ and $Y_k$ as follows. Let:
$$
X_i=\begin{cases}
1, & \text {if } i^{th} \text{ character in $\mathcal{C}$ is of the kind } (\alpha, \alpha, \beta, \beta);\\
-1,  & \text {if } i^{th} \text{ character in $\mathcal{C}$ is of the kind } (\alpha, \beta, \alpha, \beta);\\
0,  & \text{else}.
\end{cases}
$$
and let:
\begin{equation*}
Y_k= \sum\limits_{i=1}^{k} X_i.
\end{equation*}
The probability that MP will favour the tree $12|34$ over $13|24$ is
then $\PP(Y_k>0)$. We will exploit the fact that the random
variables $X_i$ are i.i.d., and so $Y_k$ can be approximated for
large $k$ by a normal distribution with a mean $\mu_k$ and a
standard deviation $\sigma_k$. These two parameters can be easily
described (just) in terms of $\theta, p$ and $k$ as follows.
\begin{lemma} \label{lemma1} \mbox{}
\begin{enumerate}
\item $\mu_k = k \cdot \frac{1}{2}\theta^{2p}(1-\theta)$.
\item $\sigma_k^2= k \cdot \frac{1}{4}(1+2\theta^{4p+1} -2\theta^{2p+1}
-\theta^{4p+2})$.
\item $\frac{\mu_k}{\sigma_k} \geq  \sqrt{k}\cdot \theta^{2p}(1-\theta)$.
\end{enumerate}
\end{lemma}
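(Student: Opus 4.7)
The three parts are a mean/variance computation for a sum of i.i.d.\ random variables, followed by an elementary sign check, so the plan is largely bookkeeping once one extracts the right inputs.

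I would begin by observing that since the $X_i$ are i.i.d., we have $\mu_k = k \EE[X_1]$ and $\sigma_k^2 = k \operatorname{Var}(X_1)$, so everything reduces to computing $\EE[X_1]$ and $\EE[X_1^2]$. By the definition of $X_i$, together with the convention that complementary site patterns are identified,
\begin{equation*}
\EE[X_1] = p_{s_1} - p_{s_2}, \qquad \EE[X_1^2] = p_{s_1} + p_{s_2}.
\end{equation*}
Substituting the Hadamard expressions (\ref{s1}) and (\ref{s2}) immediately gives
\begin{equation*}
p_{s_1} - p_{s_2} = \tfrac{1}{8}\bigl(4\theta^{2p} - 4\theta^{2p+1}\bigr) = \tfrac{1}{2}\theta^{2p}(1-\theta),
\end{equation*}
which establishes part (1).

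For part (2), I would compute $\operatorname{Var}(X_1) = (p_{s_1}+p_{s_2}) - (p_{s_1}-p_{s_2})^2$. The sum is $\frac{1}{4}(1 - 2\theta^{2p+1} + \theta^{4p})$ and the squared mean is $\frac{1}{4}\theta^{4p}(1-\theta)^2 = \frac{1}{4}(\theta^{4p} - 2\theta^{4p+1} + \theta^{4p+2})$. Subtracting cancels the $\theta^{4p}$ terms and yields
\begin{equation*}
\operatorname{Var}(X_1) = \tfrac{1}{4}\bigl(1 - 2\theta^{2p+1} + 2\theta^{4p+1} - \theta^{4p+2}\bigr),
\end{equation*}
which matches the stated formula after multiplication by $k$.

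For part (3), plugging the formulas for $\mu_k$ and $\sigma_k$ from parts (1) and (2) into $\mu_k/\sigma_k$ and simplifying gives
\begin{equation*}
\frac{\mu_k}{\sigma_k} = \frac{\sqrt{k}\,\theta^{2p}(1-\theta)}{\sqrt{1 + 2\theta^{4p+1} - 2\theta^{2p+1} - \theta^{4p+2}}},
\end{equation*}
so the claim reduces to showing that the radicand in the denominator is at most $1$, i.e.\ that
\begin{equation*}
2\theta^{4p+1} - 2\theta^{2p+1} - \theta^{4p+2} \leq 0,
\end{equation*}
which rearranges to $2\theta^{2p+1}(\theta^{2p}-1) \leq \theta^{4p+2}$. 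Since $\theta = e^{-2x} \in (0,1]$ we have $\theta^{2p}-1 \leq 0$, so the left side is non-positive while the right side is non-negative; the inequality is automatic.

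The only mild obstacle is algebraic care in part (2) --- making sure the $\theta^{4p}$ terms cancel and the surviving monomials are recorded with the right coefficients. There is no genuinely difficult step: parts (1) and (2) are definitional, and part (3) is a one-line sign comparison after the explicit substitution.
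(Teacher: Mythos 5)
Your proposal is correct and follows essentially the same route as the paper: compute $\EE[X_1]$ and $\operatorname{Var}(X_1)$ from $p_{s_1}$ and $p_{s_2}$ via the i.i.d.\ structure, then show the radicand in the denominator of $\mu_k/\sigma_k$ is at most $1$. Your sign argument for part (3) ($2\theta^{2p+1}(\theta^{2p}-1)\leq 0 \leq \theta^{4p+2}$) is a slightly more direct rearrangement than the paper's factoring, but it is the same idea.
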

\begin{proof}
Since $X_1, \ldots, X_k$ are independent and take values $+1,0$ and $-1$, we have:
\begin{itemize}
\item[(i)] $\mu_k = k \cdot \left[ \PP(X_1=1) - \PP(X_1=-1)\right]$
\item[(ii)]  $\sigma_k^2= k \cdot \left[ \PP(X_1=1) + \PP(X_1=-1) - \left[ \PP(X_1=1) - \PP(X_1=-1)\right]^2 \right]$
\end{itemize}
Now in the two-state symmetric model and the generating tree in Fig.
1(a), we have:
$$\PP(X_1 = 1) = p_{s_1}, \mbox { and } \PP(X_1=-1) = p_{s_2},$$
where $p_{s_1}, p_{s_2}$ were given above in Eqns. (\ref{s1}) and (\ref{s2}), respectively. Parts (1) and (2) of the lemma now follow by substitution of the expressions for $p_{s_1}, p_{s_2}$ into (i) and (ii) respectively.
For Part (3), note that Parts (1) and (2) imply that
\begin{equation}
\label{eq}
\frac{\mu_k}{\sigma_k} = \sqrt{k}\cdot \frac{N_{\theta}}{D_{\theta}}
\end{equation}
where $N_\theta = \theta^{2p}(1-\theta); D_\theta =
\sqrt{1+2\theta^{4p+1}-2\theta^{2p+1}-\theta^{4p+2})}.$ We now show
that $D_\theta \leq 1$. We have $1+0.5\theta^{2p+1} \geq
\theta^{2p}$ and so $2\theta^{2p+1} ( 1 - \theta^{2p} + 0.5
\theta^{2p+1}) \geq 0$. Consequently $1- 2\theta^{2p+1}( 1 -
\theta^{2p} + 0.5 \theta^{2p+1} ) \leq 1$, which implies that
$D_{\theta}^2 \leq 1$. Part (3) now follows from (\ref{eq}) by the
inequality $D_{\theta}  \leq 1$.
\end{proof}

\begin{proof}[Proof of Theorem~\ref{MPtheorem}]

Note that the MP reconstruction probability is the probability that MP will favour the true tree $12|34$ over both alternative trees on four taxa, namely $13|24$ and $14|23$. Recall that the event of the tree $12|34$ being favoured over $13|24$ can be expressed as $\PP(Y_k> 0)$. The event of $12|34$ being favoured over $14|23$ can be expressed similarly by defining the random variables $\tilde{X}_i$ and $\tilde{Y}_k$ which are analogous to $X_i$ and $Y_k$, using the character $(\alpha, \beta, \beta, \alpha)$ instead of $(\alpha, \beta, \alpha, \beta)$. Then, the MP reconstruction probability can be written as $\PP\left((Y_k > 0) \cap (\tilde{Y}_k >  0) \right)$.
Let:
$$Z_k = \frac{Y_k-\mu_k}{\sigma_k}.$$ Thus, $Z_k$ is the normalised difference of the parsimony score between tree $13|24$ and  $12|34$ for a $k$ i.i.d. characters generated by the tree in Fig. 1(a).
By Lemma~\ref{lemma1}(3) we have
\begin{equation} \label{main_idea}
\PP(Y_k  \leq 0) = \PP(Z_k \leq  -\frac{\mu_k}{\sigma_k}) \leq \PP\left(Z_k \leq -\sqrt{k}\theta^{2p}(1-\theta)\right).
\end{equation}
Now, by symmetry
of the branch length of the generating tree in Fig. 1(a), we have
$\PP(Y_k \leq 0) = \PP(\tilde{Y}_k \leq  0)$. Moreover, by Boole's
inequality: $$\PP\left((Y_k > 0) \cap (\tilde{Y}_k > 0) \right) \geq
1- \PP(Y_k \leq 0) - \PP(\tilde{Y}_k\leq 0),$$ which, combined with
(\ref{main_idea}), furnishes the following inequality for
the MP reconstruction probability:
\begin{equation}
\label{keq2}
 \PP\left((Y_k> 0) \cap (\tilde{Y}_k >0) \right) \geq 1- 2\PP(Y_k\leq 0)  \geq 1 - 2\PP(Z_k \leq -\sqrt{k}\theta^{2p}(1-\theta)).
 \end{equation}
Now, $\theta^{2p}\cdot(1-\theta)$ has a unique local maximum in
$[0,1]$, namely at $\theta':=1-\frac{1}{2p+1}$, at which it takes
the value $\alpha_p/p$, where $\alpha_p=\left(1- \frac{1}{1+2p}
\right)^{2p} \cdot \frac{p}{(1+2p)} \sim_p \frac{1}{2}e^{-1}$.
Moreover, the difference between the distribution of $Z_k$ and a
standard normal distribution tends uniformly to zero as $p$ (and
hence $k$) grows. This follows by applying standard bounds on the
central limit theorem approximation (see, for example, \cite{zahl};
one cannot directly apply the usual form of the central limit
theorem as the distribution of the $X_i$'s is changing with
increasing $p$). Thus we have $\PP(Z_k \leq
-\sqrt{k}\frac{\alpha_p}{p}) \leq \epsilon/2$ provided that $k$
grows at the rate $c'p^2f(\frac{\epsilon}{2})^2$ for $c' \sim_p
4e^2$.

In summary, by (\ref{keq2}), a value for $\theta$ exists,
namely $\theta'=1- \frac{1}{1+2p}$, and thus a value for $P(e_5) =
\frac{1}{2}(1-\theta') =\frac{1}{2(1+2p)} \sim \frac{1}{4p}$ also exists, for
which the MP reconstruction probability is at least $1-\epsilon$.
This completes the proof.
\end{proof}

\subsection{Remarks}
\begin{itemize}
\item
Regarding Theorem~\ref{MPtheorem}, other tree
reconstruction methods have a similar performance to MP
when $k$ grows at the rate $p^2$. Indeed it is possible that such methods
will require shorter sequences, and better statistical properties on trees with different
tree shapes (as MP is statistically inconsistent under some combinations of branch lengths that lie outside those considered in
the scenario of Fig. 1).  We have chosen to consider MP here, because the analysis is relatively straightforward and it suffices to prove the matching lower $p^2$ bound.

\item
One can also derive a (non-asymptotic) form of
Theorem~\ref{MPtheorem} using Azuma's inequality \cite{alon};
however, the constant term in place of $c_\epsilon$ is larger by
a factor of $32$.

\item
The optimal choice of $x$ of (approximately) $\frac{1}{4p}$ for MP has been observed in a slightly different setting by ~\cite{townsend}. 

\item One can ask whether similar $p^2$ bounds on $k$
will apply for more complex models.  We conjecture that for
stationary, reversible, finite-state Markov processes, the
results will be essentially the same for our tree in Fig. 1, up
to a different constant factor $c$.

\item For Markov processes in which the state space is countably
infinite -- and where a substitution is always to a new state
(the `random cluster model' for homoplasy-free evolution, described in \cite{mos}) -- the
situation regarding sequence length requirements is quite
different. In this case, the required sequence length need only
grow at the rate $p$ (not $p^2$), as the following result shows.

\begin{proposition}
Suppose $k$ sites evolve i.i.d. under
a random cluster model  model on some (unknown) four-taxon tree that has
branch length $x$ on the interior edge and $px$ on each terminal edge. Then for a
constant $c'_{\epsilon}$ which depends just on $\epsilon$, the
following holds: If $k \geq c'_{\epsilon}\cdot p$,
an $x$ exists for which the MP reconstruction probability is at least
$1-\epsilon$.
\end{proposition}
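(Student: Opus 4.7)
The plan is to exploit the distinctive feature of the random cluster (infinite-allele) model: because every substitution produces a state never seen before, the only informative site pattern for MP that can arise on the true tree $\T_1 = 12|34$ is $s_1 = \alpha\alpha\beta\beta$. The patterns $s_2$ and $s_3$, which would respectively favour the two alternative topologies, have probability zero on $\T_1$ under this model. Indeed, for leaves $1$ and $3$ to share a state (as required by $s_2$), no substitution may occur on the path $e_1 e_5 e_3$, and for leaves $2$ and $4$ to share a different state, no substitution may occur on $e_2 e_5 e_4$; combining these conditions forces every edge to be substitution-free, which in turn forces all four leaves to share the common ancestral state, contradicting the bipartite structure of $s_2$. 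Consequently, if $N_1$ counts the sites with pattern $s_1$, the event $\{N_1 \geq 1\}$ implies strict MP preference for $\T_1$, since each $s_1$-site contributes parsimony score $1$ on $\T_1$ but $2$ on either alternative, and every other pattern contributes equally to all three trees.

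The argument therefore reduces to bounding $\PP(N_1 \geq 1)$ from below. Because substitutions on disjoint edges are independent under the random cluster model, and an edge of length $\ell$ is substitution-free with probability $e^{-\ell}$,
\begin{equation*}
\PP(s_1 \mid \T_1) \;=\; e^{-4px}\left(1 - e^{-x}\right).
\end{equation*}
Elementary calculus shows that this function of $x$ is maximised at $x^{\ast} = \log\!\left(1 + \tfrac{1}{4p}\right)$, with maximum value $\left(\tfrac{4p}{4p+1}\right)^{4p} \cdot \tfrac{1}{4p+1}$, which for $p>1$ is bounded below by $C/p$ for an absolute constant $C$ (with $C \to 1/(4e)$ as $p \to \infty$). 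This is the analogue of the optimal choice $x \sim 1/(4p)$ already encountered in the proof of Theorem~\ref{MPtheorem}.

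To conclude, I apply the elementary inequality $1 - (1-q)^k \geq 1 - e^{-kq}$ with $q = \PP(s_1 \mid \T_1)$ at $x = x^{\ast}$, giving an MP reconstruction probability of at least $1 - \exp\!\left(-k\cdot \PP(s_1 \mid \T_1)\right)$. Requiring this quantity to exceed $1-\epsilon$, and substituting the lower bound $\PP(s_1 \mid \T_1) \geq C/p$, yields $k \geq (\log(1/\epsilon)/C)\cdot p$, which establishes the claimed bound $k \geq c'_{\epsilon} \cdot p$ with $c'_{\epsilon}$ of order $4e\log(1/\epsilon)$.

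The substantive step is the initial observation that $s_2$ and $s_3$ have zero probability on $\T_1$; this is what separates the infinite-allele regime from the two-state regime and explains why $k$ need only grow at rate $p$ rather than $p^2$. In the two-state setting the informative signal must overcome a stochastic fluctuation of order $\sqrt{k}$, whereas here a single informative site settles the matter, so a coupon-collector style estimate replaces the central limit approximation used in the proof of Theorem~\ref{MPtheorem}. Once the zero-probability claim is verified, the remainder is a one-parameter optimisation and a geometric-tail bound, so no technical obstacle of note remains.
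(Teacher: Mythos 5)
Your proof is correct and follows essentially the same route as the paper's: identify the event that at least one character splits $\{1,2\}$ from $\{3,4\}$ with no homoplasy, note that this forces MP to prefer $12|34$ (the paper phrases this via homoplasy-free characters, you via the score comparison $1$ versus $2$ and the vanishing of the competing patterns), compute its per-site probability, optimise at $x\approx 1/(4p)$, and apply a geometric tail bound. Incidentally, your expression $e^{-4px}(1-e^{-x})$ is the correct form of the quantity the paper denotes $Q$ (its printed formula $(1-e^{-x})(1-e^{-px})^4$ contains a typo, since $1-P(e_i)=e^{-px}$ for the terminal edges), and the conclusion is unaffected.
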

\begin{proof}
In the
random cluster model, the probability of a substitution event on an
edge $e$ can be written as $P(e) = 1-\exp(-l)$ where $l$ is the
expected number of changes on the edge (the branch length).  
Now, the random cluster model only generates characters
that are homoplasy-free on the generating tree; thus MP will
return the generating tree from a sequence of characters, provided this tree is the only one on which
those characters are homoplasy-free.  For a tree with topology $12|34$, this will occur
precisely if at least one of the $k$ characters generated assigns taxa $1,2$ a shared state, and taxa $3,4$ a second shared state that is different to that assigned to $1,2$.   The probability $Q$ that any given character generated by the tree in Fig. 1(a) has this property is given by:
$$
 Q=P(e_5)\prod_{i=1}^5(1-P(e_i)) = (1-e^{-x})(1-e^{-px})^4.
$$
 Moreover, if $k \geq \log(\frac{1}{\epsilon})/Q$  then $1-(1-Q)^k \geq 1-\epsilon$ (using the inequality 
 $-\log(1-Q) \geq Q$).  Consequently, MP will correctly reconstruct the generating tree with probability at least $1-\epsilon$ provided that:
 \begin{equation}
 \label{Qeq2}
 k \geq \log(\epsilon^{-1})\cdot (1-e^{-x})^{-1}(1-e^{-px})^{-4}.
 \end{equation}
 Taking $x = 1/4p$  we have $(1-e^{-x})^{-1}(1-e^{-px})^{-4} \sim \frac{1}{4p}(1-e^{-1/4})$, which, in view of  (\ref{Qeq2}),
establishes the result.
\end{proof}

\end{itemize}

\section{Lower bounds for more general models}
\label{extend}
 In this section we derive a lower bound on the sequence length required for tree reconstruction, for a much wider range of Markov processes.
However, unlike the previous sections our bound is expressed  in terms of the
absolute branch lengths (or bounds on these) rather than in terms of
ratios, and it involves constants that  depend on the details of the model.

We first derive a general lemma.  Consider any continuous-time,
stationary and reversible Markov process.  Let ${\mathcal S}$ denote
its state space, and in keeping with earlier
terminology let $S= {\mathcal S}^4$ (thus in previous sections
${\mathcal S} = \{\alpha, \beta\}$). Let $\T_1$ and $\T_2$ be two
topologically distinct four-taxon trees. Suppose that the branch
lengths of $\T_1$ are arbitrary, and that each edge of $\T_2$ has
the corresponding interior or pendant branch length specified by
$\T_1$ (where the pendant edge incident with leaf $i$ in $\T_1$
corresponds to the pendant edge incident with leaf $i$ in $\T_2$).
For $s=(s_1, s_2, s_3, s_4) \in S$, let $p_s$ (respectively $q_s$)
denote the probability of generating $s$ at the tips of $\T_1$
(respectively $\T_2$).  Let $p'_s$ (respectively $q'_s$) denote the
conditional probability of generating $s$ at the tips of $\T_1$
(respectively $\T_2)$  given that a substitution has occurred on the
central edge of $\T_1$ (respectively $\T_2$), and let $D_s :=
q'_s-p'_s$.  Then we have the following result.
\begin{lemma}
\label{helly2}
$$d_H^2(\T_1, \T_2) \leq l^2\cdot  \sum_{s \in S} \frac{D_s^2}{p_s}$$
\end{lemma}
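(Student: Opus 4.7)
The plan is to combine a pointwise algebraic inequality on each Hellinger summand with a law-of-total-probability decomposition that isolates the role of the central edge.

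The first step is to rewrite each summand of $d_H^2(\T_1,\T_2)$ via the elementary identity
$$\bigl(\sqrt{p_s}-\sqrt{q_s}\bigr)^2 = \frac{(p_s-q_s)^2}{\bigl(\sqrt{p_s}+\sqrt{q_s}\bigr)^2},$$
and then apply the bound $\bigl(\sqrt{p_s}+\sqrt{q_s}\bigr)^2 \geq p_s$ (valid since $q_s\geq 0$). Summing over $s$ gives
$$d_H^2(\T_1,\T_2) \leq \sum_{s\in S} \frac{(p_s-q_s)^2}{p_s}.$$
It therefore suffices to show that $q_s - p_s = l\cdot D_s$ for every $s$, where $l$ is the probability that at least one substitution occurs on the central edge. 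Since $\T_1$ and $\T_2$ share the central branch length, this $l$ is the same for both trees.

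The main step is to condition on whether the central edge carries a substitution. Writing $p''_s$ and $q''_s$ for the conditional site-pattern probabilities on $\T_1$ and $\T_2$ given that no substitution occurs on the central edge, the law of total probability yields
$$p_s = l\cdot p'_s + (1-l)\cdot p''_s, \qquad q_s = l\cdot q'_s + (1-l)\cdot q''_s.$$
The crux of the argument is the claim that $p''_s = q''_s$. Indeed, if no substitution occurs on the central edge then its two endpoints are forced to share a common state $a$, and conditioning on $a$ makes the four leaf states independent; integrating out $a$ gives an expression of the form
$$\sum_{a\in \mathcal S} \mu_a \prod_{i=1}^{4} P_{e_i}(s_i\mid a),$$
where $\mu_a$ is a common weight arising from the stationary distribution. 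This expression involves only the pendant branch lengths, which by hypothesis are shared between $\T_1$ and $\T_2$, and so it is manifestly insensitive to the cherry-pairing that distinguishes the two topologies. Subtracting the two decompositions therefore gives $q_s - p_s = l(q'_s - p'_s) = l\cdot D_s$, and substituting $(p_s-q_s)^2 = l^2 D_s^2$ into the earlier bound completes the proof.

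The only point that requires a little care is the identity $p''_s = q''_s$ for a general continuous-time stationary, reversible Markov process; here ``no substitution on the central edge'' should be interpreted as the event that the embedded jump process makes no transition on that edge. Stationarity and reversibility are exactly what is needed to make the conditional distribution of the shared endpoint state coincide for the two trees, after which the star-tree product expression collapses $p''_s$ and $q''_s$ to the same quantity. Everything else in the proof is routine algebraic manipulation, so I do not anticipate any serious technical obstacle.
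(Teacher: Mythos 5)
Your argument follows the same overall strategy as the paper --- condition on whether a substitution occurs on the central edge, use the correspondence of pendant branch lengths to show that the ``no substitution'' conditional probabilities agree for the two topologies, and hence write $q_s-p_s$ as (probability of a central-edge substitution) times $D_s$. But there is one genuine slip: you define $l$ to be \emph{the probability that at least one substitution occurs on the central edge}, whereas in the lemma $l$ is the \emph{branch length} of the interior edge, i.e.\ the expected number of substitutions on it. These are different quantities. What your argument actually establishes is $d_H^2(\T_1,\T_2)\leq \tau^2\sum_{s}D_s^2/p_s$ with $\tau=\PP(N>0)$, where $N$ is the number of substitutions on the central edge. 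To reach the stated bound you need the additional observation that $\tau=\PP(N>0)\leq \EE(N)=l$, which is exactly how the paper finishes. Since $\tau\leq l$ your intermediate bound is in fact stronger, so the fix is a single line --- but as written your proof establishes a statement about a different $l$ than the one appearing in the lemma.

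On the algebraic side your route differs from the paper's in a pleasant way. The paper writes $\sqrt{p_sq_s}=p_s\sqrt{1+\tau D_s/p_s}$ and applies $\sqrt{1+y}\geq 1+\tfrac{y}{2}-\tfrac{y^2}{2}$ for $y\geq -1$, which requires using $\sum_s D_s=0$ to eliminate the linear term. Your identity $\bigl(\sqrt{p_s}-\sqrt{q_s}\bigr)^2=(p_s-q_s)^2/\bigl(\sqrt{p_s}+\sqrt{q_s}\bigr)^2\leq (p_s-q_s)^2/p_s$ yields the same chi-square--type bound more directly and without needing $\sum_s D_s=0$; both versions implicitly assume $p_s>0$, which holds in the setting of the subsequent theorem. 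Your justification of $p''_s=q''_s$ (collapsing the tree to a star with a common weight on the shared endpoint state) is more explicit than the paper's one-sentence assertion and is correct, provided one notes --- as you essentially do --- that the weight on the shared state is common to the two trees because their central branch lengths agree; it need not literally be the stationary distribution if exit rates differ between states, but that does not affect the argument.
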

where $l$ denotes the branch length of the interior edge of $\T_1$.
\begin{proof}
Let $\tau$ denote the probability that at least one substitution
occurs on the interior edge of  $\T_1$, and let $p^0_s$
(respectively $q^0_s$)  denote the conditional probability of
generating $s$ on $\T_1$ (respectively $\T_2$) given that no
substitution occurs on the interior edge of $\T_1$ (respectively
$\T_2$). By the law of total probability we have:
$$p_s = (1-\tau)\cdot p^0_s + \tau\cdot p_s'$$
and
$$q_s = (1-\tau)\cdot q^0_s + \tau\cdot q_s'.$$
Moreover, the assumptions on the correspondence between branch lengths of
$\T_1$ and $\T_2$ imply that $p^0_s = q^0_s$ for all $s\in S$ and so:
$$q_s - p_s = \tau(q_s'- p_s') = \tau D_s.$$
Now, $$d_H^2(\T_1, \T_2) = 2(1-\sum_{s \in S}\sqrt{p_sq_s}) =
2\left(1-\sum_{s \in S} p_s\sqrt{1+\frac{\tau D_s}{p_s}}\right).$$
Applying the inequality $\sqrt{1+y} \geq 1+\frac{y}{2}
-\frac{y^2}{2}$ (for all $y \geq -1$) to $y = \frac{\tau D_s}{p_s}$ (and
observing that $y \geq -1$ since $q_s \geq 0$), we obtain:
$$d_H^2(\T_1, \T_2) \leq 2\left(1-\sum_s p_s\left(1 +\tau\frac{D_s}{2p_s} -
\tau^2\frac{D_s^2}{2p_s}\right)\right).$$ Now, $\sum_s p_s=1$, and $\sum_s D_s =0$
(since $\sum_s q'_s = \sum_s p'_s = 1$) and so this last inequality
reduces to: \begin{equation} \label{yyy} d_H^2(\T_1, \T_2) \leq
\tau^2\cdot \sum_{s \in S} \frac{D_s^2}{p_s}. \end{equation}
Furthermore, $\tau =\PP(N>0)$, where $N$ is the number of substitutions
occurring on the interior edge of $\T_1$. However, $\PP(N>0) \leq
\EE(N)$; that is, $\tau \leq l$, which, together with (\ref{yyy}),
provides the inequality stated in the lemma.
\end{proof}

We now apply this lemma to a slightly more restricted class of
Markov processes to obtain the main result of this section.

\begin{theorem} \label{SecondTheorem}  Suppose $k$ sites evolve i.i.d. under
a finite-state, stationary and reversible continuous-time Markov
process in which each state is accessible from any other state.  Let $l_0$ be any strictly positive value.
Consider this process on some (unknown) four-taxon tree that has
branch length at most $l$ on the interior edge and at least $L\geq
l_0$ on each terminal edge. Then any method that is able to
correctly identify with probability at least $1-\epsilon$ the
underlying tree topologies given these restriction requires:
\begin{equation*} k \geq \frac{C}{4}(1-2\epsilon)^2  \cdot \frac{e^{cL}}{l^2}
\end{equation*}
where $c$ and $C$ are positive constants that depend only on $R$ (the rate matrix for the
process) and $l_0$.
\end{theorem}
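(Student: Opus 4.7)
The plan is to combine Lemma~\ref{infolemma} (taken with $m=2$) with Lemma~\ref{helly2}, thereby reducing the theorem to a uniform exponential decay in $L$ of the expression $\sum_s D_s^2/p_s$. Concretely, I would pick the adversarial pair $\T_1 = 12|34$ and $\T_2 = 13|24$, both equipped with interior branch length exactly $l$ and all four terminal branch lengths exactly $L$; both trees lie in the class allowed by the hypotheses, so the success-probability assumption, fed into Lemma~\ref{infolemma} with $m=2$, yields
\[
k \;\geq\; \frac{(1-2\epsilon)^2}{4}\, d_H(\T_1,\T_2)^{-2}.
\]
Since the branch lengths of $\T_1$ and $\T_2$ match edge-by-edge, Lemma~\ref{helly2} applies and gives $d_H^2(\T_1,\T_2) \leq l^2 \sum_s D_s^2/p_s$. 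The theorem thus reduces to showing $\sum_s D_s^2/p_s \leq C_1 e^{-cL}$ with $c,C_1>0$ depending only on $R$ and $l_0$.

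For the numerator I would use the spectral gap of $R$. Because $R$ is finite, irreducible and reversible, it has a strictly positive spectral gap $\mu>0$ (the absolute value of its second-largest eigenvalue), and the spectral decomposition of $P_t = e^{Rt}$ gives the uniform estimate $|P_L(u,s) - \pi_s| \leq K_1 e^{-\mu L}$ for some $K_1 = K_1(R)$. Both $p'_s$ and $q'_s$ are weighted averages, over the joint law of the endpoints $(u,v)$ of the central edge conditioned on a substitution there, of products of four such transition probabilities, one per terminal edge. Writing each factor as $\pi_{s_i} + (P_L(\cdot,s_i)-\pi_{s_i})$ and expanding, every summand other than the ``all-stationary'' term $\pi_{s_1}\pi_{s_2}\pi_{s_3}\pi_{s_4}$ contains at least one $(P_L - \pi)$ factor, hence is bounded by a constant times $e^{-\mu L}$ uniformly in $(u,v)$. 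Taking the weighted average preserves the bound, so $|p'_s - \pi_{s_1}\pi_{s_2}\pi_{s_3}\pi_{s_4}| \leq K_2 e^{-\mu L}$ and similarly for $q'_s$, whence $|D_s| \leq 2K_2 e^{-\mu L}$ with $K_2$ depending only on $R$.

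For the denominator, the map $L \mapsto P_L(u,s)$ is continuous on $[l_0,\infty)$, strictly positive by irreducibility of $R$, and converges to $\pi_s>0$ as $L \to \infty$; so $\rho_0 := \inf_{L\geq l_0}\min_{u,s} P_L(u,s) > 0$ depends only on $R$ and $l_0$, and every $p_s \geq \rho_0^4$. Combining and summing over the $|S| = |\mathcal{S}|^4$ patterns gives $\sum_s D_s^2/p_s \leq (4|\mathcal{S}|^4 K_2^2/\rho_0^4)\, e^{-2\mu L} =: e^{-2\mu L}/C$, and chaining Lemma~\ref{helly2} with Lemma~\ref{infolemma} yields the theorem with $c := 2\mu$ and $C$ as just defined---both depending only on $R$ and $l_0$.

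The main obstacle is making the bound on $|D_s|$ genuinely uniform. Under the conditioning on a substitution on the central edge, the marginal distribution of the endpoint $u$ is no longer exactly the stationary $\pi$, so the first-order terms in the expansion do not cancel as they would by stationarity in the unconditioned case. This is ultimately harmless because the uniform-in-$u$ estimate $|P_L(u,s) - \pi_s| \leq K_1 e^{-\mu L}$ already renders every nontrivial term in the expansion $O(e^{-\mu L})$; one just has to track the dependence of $K_2$ on $|\mathcal{S}|$ carefully when aggregating.
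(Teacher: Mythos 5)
Your proposal is correct and follows essentially the same route as the paper: reduce via Lemma~\ref{infolemma} (with $m=2$) and Lemma~\ref{helly2} to bounding $\sum_s D_s^2/p_s$, control $|D_s|$ by the exponential convergence of the chain to stationarity along the long terminal edges (the paper cites Rozanov's Theorem 8.3 where you invoke the spectral gap of the reversible generator --- the same fact), and bound $p_s$ below using irreducibility together with $L_j \geq l_0$. The subtlety you flag about the conditioned endpoint distribution not being stationary is handled in the paper exactly as you suggest, by using the bound on $|P_L(u,\cdot)-\pi|$ uniformly in the conditioning state.
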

\begin{proof}
 We exploit the fact that any Markov process of the type described
converges to its unique stationary distribution at an exponential
rate (see, for example, Theorem 8.3 of \cite{roz}).  Let $\pi(s)$
denote the stationary probability of $s$ under the model.
 For $j=1,\ldots, 4$, let
$p(j) \in \{u,v\}$ be the end of the interior edge $uv$ of $\T_1$
that is adjacent to leaf $j$  (we may assume $p(1)=p(2)=u;
p(3)=p(4)=v$),  and let $S_{p(j)}$ denote the random state present
at that vertex under the model.  Then for any $s_j, s_j' \in
{\mathcal S}$ there exist positive constants $A,a$ (dependent on
$R$) for which:
\begin{equation}
\label{boundy1}
 |\PP(S_j = s_j|S_{p(j)} = s'_j) - \pi(s_j)| \leq A e^{-aL_j}
 \end{equation}
 (\cite{roz}, Theorem 8.3), where $L_j$ denotes the branch length of the edge incident with leaf $j$.
For $s = (s_1, s_2, s_3, s_4) \in S={\mathcal S}^4$, let $$\pi_s =
\prod_{j=1}^4\pi(s_j).$$ For $s' s'' \in {\mathcal S}$ let
$p'(s',s'')$ denote the probability of generating state $s'$ at $u$
and the state $s''$ at $v$ given that at least one substitution
occurs on the edge $uv$. Then, by the Markov assumption, and
recalling the definition of $p_s'$ from Lemma~\ref{helly2}, we have:
\begin{equation}
\label{eqxy}
p_s' = \sum_{(s',s'') \in {\mathcal S}^2} p'(s',s'')\cdot \prod_{j=1}^2\PP(S_j = s_j|S_u = s') \cdot \prod_{j=3}^4\PP(S_j=s_j|S_v=s'').
\end{equation}
Combining (\ref{boundy1}) and (\ref{eqxy}), there exist positive
constants $B,b$ (dependent only on $R$) such that:
\begin{equation}
\label{p1eq1}
|p'_s - \pi_s| \leq B e^{-bL}
\end{equation}
 for all $s \in S$ (recall that $L \leq L_j$ for all $j$).
Now, consider tree $\T_2$ which has branch lengths that correspond
to those in $\T_1$ (as in Lemma~\ref{helly2}). Then we also have:
\begin{equation}
\label{q1eq1}
|q'_s - \pi_s| \leq B e^{-bL}
\end{equation}
 for all $s \in S$. 
 Combining (\ref{p1eq1})
and (\ref{q1eq1}) using the triangle inequality gives:
\begin{equation}
\label{triangle}
|D_s| =|q_s-p_s| \leq 2Be^{-bL}.
\end{equation}
Moreover, since $L_j \geq l_0$ (for all $j$) and each state is
accessible from any other state, we have $p_s \geq \delta$ (for some
$\delta>0$ dependent only on $R$ and $l_0$).  Combining this with
(\ref{triangle}) gives the following inequality, for all $s \in S$:
\begin{equation}
\label{xxx}
\frac{D_s^2}{p_s} \leq (4B^2/\delta) e^{-2bL}.
\end{equation}
The theorem now follows from Lemma~\ref{helly2} and Lemma~\ref{infolemma} (with $m=2$).
\end{proof}

\

\section{Concluding remarks}
In this paper we have provided precise results for a specific and simple model (the two-state symmetric process), along with less explicit results for more general Markov processes (and phrased in terms of    absolute rather than relative branch lengths).  The aim is to determine rigorous bounds on the sequence length required for resolving  a deep divergence, which may she light on debates as to whether some early radiations might be fundamentally unresolvable on the basis of current models and data.

 Of course, in applications, other phenomena (such as lineage sorting, misalignment of sequences, sequencing errors and so forth) may further impede phylogenetic reconstruction (including substitution model mis-specification, lineage sorting and alignment artifacts \cite{phil}),  however these errors are unlikely to help tree reconstruction if our bound shows it is impossible even when the idea model assumptions hold. We have seen that some models require significantly fewer characters for resolving a tree -- in particular  this holds for the random cluster model, and it is possible that new types of genomic data (involving rare genomic events where homoplasy is unlikely) can be described by these and related processes that preserve more phylogenetic signal regarding distant evolutionary divergences.  

One limitation concerning our bounds is that they apply to pure Markov processes, in which each character evolves according to the same process.   In molecular biology  a common assumption is that  there is a distribution of rates across sites, in which each sites evolves at a rate (selected i.i.d. from some distribution) that acts as a multiplier for all the branch lengths in the tree (see e.g. \cite{fels, semple_steel}).   It would be interesting to extend the analysis in the last section to these models to obtain a lower bound on $k$ analogous to Theorem~\ref{SecondTheorem}.

\section{Acknowledgements}
We thank the {\em Allan Wilson Centre for Molecular Ecology and Evolution} for funding this work.

\end{document}